\newtheorem{theorem}{Theorem}[section]
\newtheorem{lemma}[theorem]{Lemma}
\theoremstyle{definition}
\theoremstyle{remark}
\newtheorem{remark}[theorem]{Remark}
\numberwithin{equation}{section}
\begin{document}

 \title[Asymptotic expansions for factorial moments]{Asymptotic expansions for factorial moments of some distributions \\  in the analysis of algorithms}

%    Only \author and \address are required; other information is
%    optional.  Remove any unused author tags.

%    author one information
% \author[short version for running head]{name for top of paper}
\author{Sumit Kumar Jha}
\address{Center for Security, Theory, and Algorithmic Research\\ 
International Institute of Information Technology, 
Hyderabad, India}
\curraddr{}
\email{kumarjha.sumit@research.iiit.ac.in}
\thanks{}

%\subjclass[2000]{Primary }
%    The 2010 edition of the Mathematics Subject Classification is
%    now available.  If you are citing a classification from the
%    new scheme, use the following input coding instead.
%\subjclass[2010]{Primary }

\date{}

\begin{abstract}
We establish asymptotic expansions for factorial moments of following distributions: number of cycles in a random permutation, number of inversions in a random permutation, and number of comparisons used by the randomized quick sort algorithm. To achieve this we use singularity analysis of certain type of generating functions due to Flajolet and Odlyzko. 
\end{abstract}

\maketitle

\section{Introduction}
We consider three different distributions arising in the analysis of algorithms in each of the following sections:
\begin{itemize}
\item[1.] Number of cycles in a random permutation.
\item[2.] Number of inversions in a random permutation
\item[3.] Number of comparisons used by the randomized quick sort algorithm.
\end{itemize}
For each of the above distributions, we establish asymptotic expansions for factorial moments using a unified framework developed by Kirschenhofer, Prodinger and Tichy in \cite{prodinger52}, and singularity analysis of certain type of generating functions due to Flajolet and Odlyzko in \cite{flajolet}.\par 
\textbf{Notation:} Let $S_{n}$ the set of all $n!$ permutations of $\{1,2,\cdots,n\}$. The Euler's constant is $\gamma=0.577\cdots $. Let $\mathfrak{R}_{p,q}(u)$ be an unspecified linear combination of terms of the form $$\left(\log\frac{1}{1-u}\right)^{i}(1-u)^{-j-1}$$ where $i,j$ are integers with either $j<q$ and $i$ is arbitrary, or $j=q$ and $i\leq p$ \par 
We would use the following result throughout:
\begin{theorem}[Flajolet and Odlyzko \cite{flajolet}]
Let 
\begin{equation}
f_{\alpha,\beta}(u)\equiv f(u)=\frac{1}{(1-u)^{\alpha}}\left(\log\frac{1}{1-u}\right)^{\beta},
\end{equation}
where $\alpha$ is a positive integer and $\beta$ is a non-negative integer. The coefficient of $u^{n}$ in $f(u)$, denoted $[u^{n}]f(u)$, admits the asymptotic expansion
\begin{equation}
\label{main}
[u^{n}]f(u)\sim \frac{n^{\alpha-1}}{(\alpha-1)!}(\log n)^{\beta}\left[1+\frac{C_{1}}{1!}\,\frac{\beta}{\log n}+\frac{C_{2}}{2!} \, \frac{(\beta)(\beta-1)}{\log^{2}(n)}+\cdots \right],
\end{equation}  
where 
$$C_{k}=(\alpha-1)!\left[\frac{d^{k}}{dx^{k}}\frac{1}{\Gamma(x)}\right]_{x=\alpha}$$
and $\Gamma(\cdot)$ is the gamma function.
\end{theorem}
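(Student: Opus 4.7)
The plan is to reduce the statement to the classical expansion of the binomial coefficient $[u^{n}](1-u)^{-\alpha}=\binom{n+\alpha-1}{n}$ by treating $\beta$ as an order of differentiation in the parameter $\alpha$ rather than as an independent exponent. The first step is the identity
\[
f_{\alpha,\beta}(u)=\frac{\partial^{\beta}}{\partial\alpha^{\beta}}(1-u)^{-\alpha},
\]
which follows from $\tfrac{\partial}{\partial\alpha}(1-u)^{-\alpha}=(1-u)^{-\alpha}\log\tfrac{1}{1-u}$ iterated $\beta$ times. Extracting coefficients termwise gives
\[
[u^{n}]f_{\alpha,\beta}(u)=\frac{\partial^{\beta}}{\partial\alpha^{\beta}}\frac{\Gamma(n+\alpha)}{\Gamma(\alpha)\,\Gamma(n+1)}.
\]

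Next I would invoke the standard ratio-of-gamma-functions expansion
\[
\frac{\Gamma(n+\alpha)}{\Gamma(n+1)}=n^{\alpha-1}\Bigl(1+\tfrac{\alpha(\alpha-1)}{2n}+O(n^{-2})\Bigr),
\]
whose correction coefficients are polynomials in $\alpha$. This yields $[u^{n}](1-u)^{-\alpha}=\tfrac{n^{\alpha-1}}{\Gamma(\alpha)}\bigl(1+O(1/n)\bigr)$ uniformly on compact sets of $\alpha$, so that termwise differentiation in $\alpha$ is legitimate. Applying Leibniz to $\partial_{\alpha}^{\beta}\bigl(n^{\alpha-1}\cdot\Gamma(\alpha)^{-1}\bigr)$ together with the elementary formula $\partial_{\alpha}^{j}n^{\alpha-1}=n^{\alpha-1}(\log n)^{j}$ then produces, to leading order,
\[
[u^{n}]f_{\alpha,\beta}(u)\sim n^{\alpha-1}\sum_{k=0}^{\beta}\binom{\beta}{k}(\log n)^{\beta-k}\Bigl[\tfrac{d^{k}}{dx^{k}}\tfrac{1}{\Gamma(x)}\Bigr]_{x=\alpha}.
\]

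To put this in the stated form I would factor out $\tfrac{n^{\alpha-1}}{(\alpha-1)!}(\log n)^{\beta}$, note that $[d^{k}/dx^{k}](1/\Gamma(x))|_{x=\alpha}=C_{k}/(\alpha-1)!$ by definition of $C_{k}$, and rewrite $\binom{\beta}{k}=\beta(\beta-1)\cdots(\beta-k+1)/k!$. The resulting sum matches the bracketed series in \eqref{main} term by term, which establishes the asymptotic expansion.

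The main obstacle is the uniformity needed to justify interchanging differentiation in $\alpha$ with the asymptotic expansion in $n$: one needs the $O(1/n)$ remainder in the gamma-ratio expansion to have $\alpha$-derivatives of the same order, so that after $\beta$ differentiations its contribution is still $O\bigl(n^{\alpha-1}(\log n)^{\beta}/n\bigr)$, which is negligible relative to every $n^{\alpha-1}(\log n)^{\beta-k}$ appearing in the main sum and is therefore absorbable into the tail of an asymptotic expansion in powers of $1/\log n$. This uniformity follows from Cauchy estimates applied to the holomorphic remainder in a small complex disk around the positive integer $\alpha$. Once this technical point is secured, the remainder of the argument is a routine bookkeeping of derivatives and recovers precisely the Flajolet--Odlyzko expansion \eqref{main}.
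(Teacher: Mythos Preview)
The paper does not contain a proof of this theorem at all: it is quoted from the cited reference \cite{flajolet} and then used as a black box throughout Sections~2--4. There is therefore nothing in the paper to compare your argument against.

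That said, your sketch is sound and is in fact one of the standard routes to this result. The identity $f_{\alpha,\beta}(u)=\partial_{\alpha}^{\beta}(1-u)^{-\alpha}$ is immediate, and since $[u^{n}](1-u)^{-\alpha}=\alpha(\alpha+1)\cdots(\alpha+n-1)/n!$ is a polynomial in $\alpha$ for each fixed $n$, the interchange of $[u^{n}]$ with $\partial_{\alpha}^{\beta}$ needs no justification. The Leibniz computation you outline then reproduces the bracketed series exactly. Your identification of the only genuine issue---uniformity of the $O(1/n)$ remainder in $\alpha$ so that its $\beta$-fold $\alpha$-derivative remains $O(n^{\alpha-1}(\log n)^{\beta}/n)$---is correct, and the Cauchy-estimate argument on a small complex disk about the integer $\alpha$ is the right way to handle it. For comparison, the original Flajolet--Odlyzko proof proceeds instead via a Hankel contour integral representation of $[u^{n}]f(u)$ and direct asymptotic evaluation of that integral; your differentiation-in-a-parameter approach is more elementary in the special case where $\alpha$ is a positive integer and $\beta$ a non-negative integer, which is all that the present paper requires.
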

\section{Number of cycles in a random permutation}
Let $C_{n,k}$ be the number of permutations in $S_{n}$ with exactly $k$ cycles. We use the following lemma from \cite{Stanley}:
\begin{lemma}
We have
\begin{equation}
\label{eq1}
C_{n,k}=(n-1)\cdot C_{n-1,k}+C_{n-1,k-1}\qquad (n,k\geq 1),
\end{equation}
with the initial conditions $C_{n,k}=0$ if $n<k$ or $k=0$, except $C_{0,0}=1$.
\end{lemma}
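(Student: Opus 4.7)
The plan is to prove the recurrence by a direct combinatorial argument, classifying permutations $\pi \in S_n$ with $k$ cycles according to the role of the element $n$. There are two mutually exclusive cases: either $n$ forms a singleton cycle $(n)$ on its own, or $n$ appears in a cycle of length at least two.

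In the first case, deleting the singleton cycle $(n)$ yields a bijection with permutations of $\{1,2,\ldots,n-1\}$ having exactly $k-1$ cycles, so this case contributes $C_{n-1,k-1}$. In the second case, I would describe the insertion procedure: starting from a permutation $\sigma \in S_{n-1}$ with $k$ cycles (written in cycle notation), the element $n$ can be inserted immediately after any one of the $n-1$ elements in $\sigma$, and each such insertion produces a distinct permutation of $\{1,\ldots,n\}$ in which $n$ lies in a cycle of length $\geq 2$ and the total number of cycles remains $k$. Conversely, given any permutation of $\{1,\ldots,n\}$ with $k$ cycles in which $n$ is not a fixed point, removing $n$ from its cycle (and splicing the cycle together) recovers a unique pair $(\sigma, j)$ where $\sigma \in S_{n-1}$ has $k$ cycles and $j \in \{1,\ldots,n-1\}$ indicates the predecessor of $n$. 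This shows the second case contributes $(n-1) \cdot C_{n-1,k}$.

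Adding the two contributions gives the recurrence \eqref{eq1}. The stated initial conditions $C_{n,k}=0$ for $n<k$ or $k=0$ (with $k\geq 1$), together with $C_{0,0}=1$, are immediate from the definition: a permutation cannot have more cycles than elements, every non-empty permutation has at least one cycle, and the empty permutation is vacuously a product of zero cycles.

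The argument is entirely routine; the only minor subtlety is verifying that the insertion map in the second case is a genuine bijection, in particular that every non-fixed-point occurrence of $n$ arises from exactly one $(\sigma,j)$ pair. I would spell this out by noting that the predecessor of $n$ in its cycle uniquely determines $j$, and splicing is the inverse of insertion.
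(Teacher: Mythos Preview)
Your proof is correct and is the standard combinatorial argument for this recurrence (the unsigned Stirling numbers of the first kind). Note, however, that the paper does not actually prove this lemma at all: it is simply quoted from \cite{Stanley} without argument. So your write-up supplies a proof where the paper gives only a citation; there is nothing to compare against, and what you have is exactly the classical derivation one would expect.
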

Let $s$th factorial moment of number of cycles in a random permutation be
$$\beta_{s}(n)=\sum_{k\geq 0}(k)_{s}\frac{C_{n,k}}{n!}$$
where $(k)_{s}=(k)(k-1)\cdots (k-s+1)$ and $(k)_{0}=1$.
\begin{theorem}
We have
$$\beta_{s}(n)=\log^{s}(n)+\gamma s\log^{s-1}(n)+O(\log^{s-2}(n))$$
for integers $s\geq 1$.
\end{theorem}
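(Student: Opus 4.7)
The plan is to reduce the statement to an application of the Flajolet--Odlyzko theorem by identifying the ordinary generating function of $\beta_s(n)$ with $f_{1,s}(u)$.

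First I would establish the bivariate generating function
\begin{equation*}
\sum_{n\ge 0}\frac{u^n}{n!}\sum_{k\ge 0}C_{n,k}v^k=(1-u)^{-v}.
\end{equation*}
This can be derived either directly from the recurrence \eqref{eq1} (which shows that the inner sum equals $v(v+1)\cdots(v+n-1)$ and then sums to $(1-u)^{-v}$ by Newton's binomial series), or cited as the classical EGF of the unsigned Stirling numbers of the first kind. Next I would exchange $\partial^s/\partial v^s$ with the summations and evaluate at $v=1$. Since $(k)_s=\partial^s v^k/\partial v^s\big|_{v=1}$, and since $\partial^s(1-u)^{-v}/\partial v^s=(\log(1/(1-u)))^s(1-u)^{-v}$, this yields the key identity
\begin{equation*}
\sum_{n\ge 0}\beta_s(n)\,u^n=\frac{1}{1-u}\left(\log\frac{1}{1-u}\right)^s=f_{1,s}(u).
\end{equation*}

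Then I would apply the Flajolet--Odlyzko theorem with $\alpha=1$ and $\beta=s$, giving
\begin{equation*}
\beta_s(n)\sim(\log n)^s\left[1+C_1\frac{s}{\log n}+\frac{C_2}{2}\frac{s(s-1)}{\log^2 n}+\cdots\right].
\end{equation*}
The remaining computation is to identify $C_1$. By definition,
\begin{equation*}
C_1=\left.\frac{d}{dx}\frac{1}{\Gamma(x)}\right|_{x=1}=-\frac{\Gamma'(1)}{\Gamma(1)^2}=\gamma,
\end{equation*}
using the well-known value $\Gamma'(1)=-\gamma$. Truncating the expansion after the second term and absorbing the tail into $O(\log^{s-2}(n))$ gives the claimed asymptotic.

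The steps here are all routine once the generating function identity is in hand; the only subtle point will be justifying that legitimately differentiating $(1-u)^{-v}$ in $v$ term by term yields the generating function of $\beta_s(n)$, which follows from absolute convergence on a disk $|u|<1$ and the fact that the power series in $v$ has infinite radius of convergence. No genuine obstacle should arise, so the proof is essentially a packaging of the Flajolet--Odlyzko transfer theorem together with the classical EGF identity.
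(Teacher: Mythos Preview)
Your proposal is correct and follows essentially the same route as the paper: identify the bivariate generating function as $(1-u)^{-v}$, differentiate $s$ times in $v$ at $v=1$ to obtain $\sum_{n\ge0}\beta_s(n)u^n=\tfrac{1}{1-u}\bigl(\log\tfrac{1}{1-u}\bigr)^s$, and then apply the Flajolet--Odlyzko theorem with $\alpha=1$, $\beta=s$ together with $C_1=\gamma$. The only cosmetic difference is that the paper obtains $(1-u)^{-z}$ by solving the first-order ODE coming from the recurrence, whereas you invoke the rising-factorial identity $\sum_k C_{n,k}v^k=v(v+1)\cdots(v+n-1)$ and Newton's binomial series directly.
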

\begin{proof}
We start by defining the corresponding \emph{probability generating function}:
$$G_{n}(z)=\sum_{k\geq 0}\frac{C_{n,k}}{n!}z^{k},$$
with $G_{0}(z)=1$. After equation \eqref{eq1}, we have
\begin{eqnarray*}
G_{n}(z)=\sum_{k\geq 1}\frac{C_{n,k}}{n!}z^{k}=\frac{n-1}{n}\cdot \sum_{k\geq 1} \frac{C_{n-1,k}\cdot z^{k}}{(n-1)!}+\frac{z}{n}\cdot\sum_{k\geq 1} C_{n-1,k-1}\cdot \frac{z^{k-1}}{(n-1)!}\\
=\frac{z+n-1}{n}\cdot G_{n-1}(z)
\end{eqnarray*}
for $n\geq 1$. Next we consider the \emph{bi-variate generating function} defined by:
$$H(z,u)=\sum_{n\geq 0}G_{n}(z)u^{n}.$$ We then have
\begin{eqnarray*}
\frac{\partial }{\partial u}H(z,u)=\sum_{n\geq 1}n\cdot G_{n}(z)u^{n-1}=\sum_{n\geq 1}(z+n-1)\cdot G_{n-1}(z)u^{n-1}\\
=z\cdot H(z,u)+u\cdot \sum_{n\geq 1}(n-1)\cdot G_{n-1}(z)u^{n-2}\\
=z\cdot H(z,u)+u\cdot \frac{\partial H(z,u)}{\partial u}
\end{eqnarray*}
which gives 
\begin{equation}
\label{eq2}
\frac{\partial H(z,u)}{\partial u}=\frac{z}{1-u}\cdot H(z,u)
\end{equation}
with $H(1,u)=\sum_{n\geq 0}u^{n}=(1-u)^{-1}$.\par 
Solving equation \ref{eq2} gives
\begin{equation}
H(z,u)=(1-u)^{-z}.
\end{equation}
Now note that 
$$\beta_{s}(n)=\left[\frac{d^{s}}{dz^{s}}G_{n}(z)\right]_{z=1}.$$
The generating functions $f_{s}(u)$ of $\beta_{s}(n)$ are defined by
$$f_{s}(u)=\sum_{n\geq 0}\beta_{s}(n)u^{n}.$$
By the Taylor series we have
$$H(z,u)=\sum_{n\geq 0}\frac{G_{n}(z)}u^{n}=\sum_{n\geq 0}\sum_{s\geq 0}\frac{(z-1)^{s}}{s!}\beta_{s}(n)\cdot u^{n}=\sum_{s\geq 0}\frac{(z-1)^{s}}{s!}f_{s}(u)$$
which gives
$$f_{s}(u)=\left[\frac{\partial^{s}}{\partial z^{s}}H(z,u)\right]_{z=1}=\frac{1}{1-u}\left(\log\frac{1}{1-u}\right)^{s}.$$
Now using equation \eqref{main} (with $\alpha=1$ and $\beta=s$) we conclude the assertion after noting $\gamma=C_{1}.$
\end{proof}
\section{Number of inversions in a random permutation}
In a permutation $\sigma = (a_{1},a_{2},\cdots,a_{n})$ a pair $(a_{i}, a_j )$, $i < j$ is called inversion if $a_i > a_j$. Let $I_{n,k}$ be the number of permutations in $S_{n}$ having a total of $k$ inversions. Then we have from \cite{KnuthSort}
$$I_{n,k}=I_{n-1,k}+I_{n-1,k-1}+\cdots+I_{n-1,k-(n-1)}\quad (n,k\geq 1),$$
with the initial conditions $I_{n,k}=0$ if $n<k$ or $k=0$, except $I_{0,0}=1$.\par 
The corresponding \emph{probability generating function} is
$$G_{n}(z)=\sum_{k\geq 0}\frac{I_{n,k}}{n!}z^{k},$$
with $G_{0}(z)=1$. Then we have 
\begin{eqnarray*}
G_{n}(z)=\sum_{k\geq 0}\frac{I_{n,k}}{n!}z^{k}=\sum_{k\geq 0}(I_{n-1,k}+I_{n-1,k-1}+\cdots+I_{n-1,k-(n-1)})\cdot \frac{z^{k}}{n!}\\
=\frac{1+z+z^{2}+\cdots +z^{n-1}}{n}\cdot G_{n-1}(z)\\
=\frac{1-z^{n}}{n(1-z)}G_{n-1}(z).
\end{eqnarray*}
for $n\geq 1$. Next we consider the \emph{bi-variate generating function} defined by:
$$H(z,u)=\sum_{n\geq 0}G_{n}(z)u^{n}.$$
We then have
\begin{eqnarray*}
\frac{\partial}{\partial u}H(z,u)=\sum_{n\geq 0}nG_{n}(z)u^{n-1}=\sum_{n\geq 0}\frac{1-z^{n}}{1-z}G_{n-1}(z)u^{n-1}=\frac{H(z,u)}{1-z}-\frac{z\cdot H(z,zu)}{1-z}
\end{eqnarray*}
which gives 
\begin{equation}
\label{eq3}
(1-z)\cdot \frac{\partial}{\partial u}H(z,u)=H(z,u)-z\cdot H(z,zu)
\end{equation}
with $H(1,z)=\sum_{n\geq 0}u^{n}=(1-u)^{-1}$.\par 
Let $s$th factorial moment of number of inversions in a random permutation be
$$\beta_{s}(n)=\sum_{k\geq 0}(k)_{s}\frac{I_{n,k}}{n!}$$
where $(k)_{s}=(k)(k-1)\cdots (k-s+1)$ and $(k)_{0}=1$. We would prove here that
\begin{theorem}
We have
$$\beta_{s}(n)=\frac{n^{2s}}{4^{s}}+\frac{s(2s-11)}{9\cdot 4^{s}}n^{2s-1}+O(n^{2s-2})$$
for integers $s\geq 1$.
\begin{proof}
First note that
$$\beta_{s}(n)=\left[\frac{d^{s}}{dz^{s}}G_{n}(z)\right]_{z=1}.$$
The generating functions $f_{s}(u)$ of $\beta_{s}(n)$ are defined by
$$f_{s}(u)=\sum_{n\geq 0}\beta_{s}(n)u^{n}.$$
By the Taylor series we have
$$H(z,u)=\sum_{n\geq 0}G_{n}(z)u^{n}=\sum_{n\geq 0}\sum_{s\geq 0}\frac{(z-1)^{s}}{s!}\beta_{s}(n)\cdot u^{n}=\sum_{s\geq 0}\frac{(z-1)^{s}}{s!}f_{s}(u).$$
After equation \eqref{eq3} we have
\begin{eqnarray*}
\sum_{s\geq 0}\frac{(z-1)^{s+1}}{s!}f'_{s}(u)=z\cdot H(z,zu)-H(z,u) \\
=z\cdot \sum_{s\geq 0}\frac{(z-1)^{s}}{s!}f_{s}(uz)-\sum_{r\geq 0}\frac{(z-1)^{r}}{r!}f_{r}(u)\\
=z\cdot \sum_{s\geq 0}\frac{(z-1)^{s}}{s!}\sum_{m\geq 0}\frac{f_{s}^{(m)}(u)(z-1)^{m}u^{m}}{m!}-\sum_{r\geq 0}\frac{(z-1)^{r}}{r!}f_{r}(u)\\
=\sum_{i\geq 0}a_{i}(z-1)^{i}\cdot \sum_{s\geq 0}\frac{(z-1)^{s}}{s!}\sum_{m\geq 0}\frac{f_{s}^{(m)}(u)(z-1)^{m}u^{m}}{m!}-\sum_{r\geq 0}\frac{(z-1)^{r}}{r!}f_{r}(u)\\
=\sum_{i\geq 0}a_{i}(z-1)^{i}\cdot \sum_{j\geq 0}\frac{(z-1)^{j}}{j!}\sum_{m\geq 0}\frac{f_{j}^{(m)}(u)(z-1)^{m}u^{m}}{m!}-\sum_{r\geq 0}\frac{(z-1)^{r}}{r!}f_{r}(u)\\
=\sum_{s\geq 0}\left\{\sum_{i+j+m=s}\frac{a_{i}\cdot f_{j}^{(m)}\cdot u^{m}}{j!\cdot m!}-\frac{f_{s}(u)}{s!}\right\}(z-1)^{s},
\end{eqnarray*}
where to deduce the third step we used Taylor series, and then we replaced $z$ by $\sum_{i}a_{i}(z-1)^{i}$ where $a_0=a_1=1$ and $a_{i}=0$ for $i\geq 2$. The above after comparison of coefficients on both sides gives
$$f_{s-1}'(u)=(s-1)!\cdot \left\{\sum_{i+j+m=s}\frac{a_{i}\cdot f_{j}^{(m)}\cdot u^{m}}{j!\cdot m!}-\frac{f_{s}(u)}{s!}\right\},$$
for $s\geq 1$, which is
\begin{eqnarray*}
\label{eq4}
f_{s}'(u)=s!\cdot \left\{\sum_{i+j+m=s+1}\frac{a_{i}\cdot f_{j}^{(m)}\cdot u^{m}}{j!\cdot m!}-\frac{f_{s+1}(u)}{(s+1)!}\right\}\\
=s!\cdot \left\{\sum_{j+m=s+1}\frac{f_{j}^{(m)}\cdot u^{m}}{j!\cdot m!}+\sum_{j+m=s}\frac{f_{j}^{(m)}\cdot u^{m}}{j!\cdot m!}-\frac{f_{s+1}(u)}{(s+1)!}\right\}\\
=s!\left\{\sum_{m=0}^{s+1}\frac{f_{s+1-m}^{(m)}\cdot u^{m}}{m!\cdot (s+1-m)!}+\sum_{m=0}^{s}\frac{f_{s-m}^{(m)}\cdot u^{m}}{(s-m)!\cdot m!}-\frac{f_{s+1}(u)}{(s+1)!}\right\}\\
=s!\left\{\sum_{m=1}^{s+1}\frac{f_{s+1-m}^{(m)}\cdot u^{m}}{m!\cdot (s+1-m)!}+\sum_{m=0}^{s}\frac{f_{s-m}^{(m)}\cdot u^{m}}{(s-m)!\cdot m!}\right\}.
\end{eqnarray*}
for $s\geq 1$. This results in the following linear differential equation
\begin{equation}
\label{eq5}
f_{s}'(u)-\frac{f_{s}(u)}{(1-u)}=h_{s}(u)
\end{equation}
where 
$$
h_{s}(u)=\frac{1}{1-u}\left\{\frac{1}{s+1}\sum_{m=2}^{s+1}\binom{s+1}{m}f_{s+1-m}^{(m)}\cdot u^{m}+\sum_{m=1}^{s}\binom{s}{m}f_{s-m}^{(m)}\cdot u^{m}\right\}.
$$
The linear differential equation \eqref{eq5} can be solved as
$$f_{s}(u)=\frac{1}{1-u}\int_{0}^{u}h_{s}(t)(1-t)dt.$$
It can be proved using induction (see \cite{prodinger52} for similar calculation) that
$$f_{s}(u)=\frac{(2s)!}{4^{s}(1-u)^{2s+1}}-\frac{s}{9\cdot 4^{s-1}}(4s+5)\cdot \frac{(2s-1)!}{(1-u)^{2s}}+\mathfrak{R}_{0,2s-1}.$$
After this we conclude the result using equation \eqref{main}.
\end{proof}
\begin{remark}
See \cite{panny} for an alternate calculation.
\end{remark}
\end{theorem}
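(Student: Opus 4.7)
The plan is to mimic the cycles-section approach: translate the functional-differential equation \eqref{eq3} into a recursive family of linear first-order ODEs for the moment generating functions $f_s(u)=\sum_n \beta_s(n) u^n$, identify the singular behaviour of each $f_s(u)$ at $u=1$, and invoke the Flajolet--Odlyzko transfer theorem \eqref{main} to read off the asymptotics of $\beta_s(n)$.

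First I would substitute the Taylor expansion $H(z,u)=\sum_{s\geq 0}(z-1)^{s}f_s(u)/s!$ into \eqref{eq3}; matching the coefficient of $(z-1)^{s}$ produces the inhomogeneous linear equation
$$f_s'(u)-\frac{f_s(u)}{1-u}=h_s(u),$$
where $h_s(u)$ is an explicit combination, weighted by $1/(1-u)$, of the derivatives $f_j^{(m)}(u)$ with $j<s$. Solving by the integrating factor $(1-u)$ gives the closed-form recursion
$$f_s(u)=\frac{1}{1-u}\int_{0}^{u}(1-t)h_s(t)\,dt,$$
which, starting from $f_0(u)=1/(1-u)$, produces each $f_s$ as a \emph{rational} function of $u$ with a single pole at $u=1$; in particular no logarithmic factors arise, so the singular part of $f_s$ consists entirely of poles $(1-u)^{-k}$ and is captured by the class $\mathfrak{R}_{p,q}$.

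The heart of the argument is an induction on $s$ establishing the two-term singular template
$$f_s(u)=\frac{(2s)!}{4^{s}(1-u)^{2s+1}}-\frac{s(4s+5)(2s-1)!}{9\cdot 4^{s-1}(1-u)^{2s}}+\mathfrak{R}_{0,2s-1}.$$
The base case $s=1$ is solved directly from the ODE (and is consistent with the classical mean $n(n-1)/4$). For the inductive step, the hypothesis supplies $f_j^{(m)}(u)=(2j+m)!/(4^{j}(1-u)^{2j+m+1})+\text{lower singular terms}$ for $j<s$; substituting these into the two convolution sums constituting $h_s(u)$, multiplying by $(1-t)$, integrating, and dividing by $(1-u)$, one reads off the coefficients of $(1-u)^{-(2s+1)}$ and $(1-u)^{-2s}$. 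The main obstacle is verifying that the binomial sums that arise collapse to exactly $(2s)!/4^{s}$ and $-s(4s+5)(2s-1)!/(9\cdot 4^{s-1})$; this is a careful bookkeeping exercise in the spirit of \cite{prodinger52} and is where the bulk of the work lies.

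Finally, applying \eqref{main} with $(\alpha,\beta)=(2s+1,0)$ and $(\alpha,\beta)=(2s,0)$, together with the expansions $\binom{n+2s}{2s}=n^{2s}/(2s)!+s(2s+1)n^{2s-1}/(2s)!+O(n^{2s-2})$ and $\binom{n+2s-1}{2s-1}=n^{2s-1}/(2s-1)!+O(n^{2s-2})$, and noting that every term in $\mathfrak{R}_{0,2s-1}$ contributes at most $O(n^{2s-2})$ (no logarithmic factors intrude because $f_s$ is rational), yields
$$\beta_s(n)=\frac{n^{2s}}{4^{s}}+\left(\frac{s(2s+1)}{4^{s}}-\frac{4s(4s+5)}{9\cdot 4^{s}}\right)n^{2s-1}+O(n^{2s-2})=\frac{n^{2s}}{4^{s}}+\frac{s(2s-11)}{9\cdot 4^{s}}\,n^{2s-1}+O(n^{2s-2}),$$
which is the claimed expansion.
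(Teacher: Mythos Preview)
Your proposal is correct and follows essentially the same route as the paper: expand $H(z,u)$ in powers of $(z-1)$, compare coefficients in \eqref{eq3} to obtain the first-order ODE $f_s'-f_s/(1-u)=h_s$, solve it via the integrating factor, establish by induction the two-term singular template for $f_s(u)$, and then extract coefficients. You in fact supply more detail than the paper on the final step, making explicit the binomial expansions of $[u^n](1-u)^{-(2s+1)}$ and $[u^n](1-u)^{-2s}$ and the arithmetic $s(2s+1)-\tfrac{4s(4s+5)}{9}=\tfrac{s(2s-11)}{9}$ that the paper leaves implicit when it simply invokes \eqref{main}.
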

\section{Number of comparisons used by the randomized quick sort algorithm}
We assume that input arrays to the basic quick sort algorithm of size $n$ are permutations in the set $S_{n}$ with uniform probability distribution (each permutation occurring with probability $1/n!$\, .)\par 
Let $a_{n,k}$ be the number of permutations in $S_{n}$ requiring a total of $k$ comparisons to sort by the quick sort algorithm. Define $s$th factorial moment of the number of comparisons by
\begin{equation}
\beta_{s}(n)=\sum_{k\geq 0}(k)_{s}\frac{a_{n,k}}{n!},
\end{equation}
where $(k)_{s}=k(k-1)(k-2)\cdots (k-s+1).$\par 
It can be proved (see \cite{Sumit} for an alternate calculation for following) that
$$G_{n}(z)=\frac{z^{n-1}}{n}\sum_{1\leq j \leq n}G_{n-j}(z)G_{j-1}(z)$$
with $G_{0}(z)=1$. And from this 
$$\frac{\partial H(z,u)}{\partial u}= H^{2}(z,zu)$$
with $H(1,u)=(1-u)^{-1}$. Here $G_{n}(z)=\sum_{k\geq 0}\frac{a_{n,k}z^{k}}{n!}$ and $H(z,u)=\sum_{n\geq 0}G_{n}(z)u^{n}$ are corresponding probability generating function and bi-variate generating function, respectively.\par 
Let $f_{s}(u)=\sum_{n\geq 0}\beta_{s}(n)u^{n}$ be the generating functions for factorial moments. It can be proved using induction and 
$$f'_{s}(u)=s!\cdot \sum_{j+k+l+m=s}\frac{f^{(k)}_{j}(u) \cdot f_{l}^{(m)}(u)\cdot u^{k+m} }{j!\cdot k!\cdot  l!\cdot m!}$$
that
$$f_{s}(u)=\frac{2^{s}\, s!}{(1-u)^{s+1}}\left(\log\frac{1}{1-u}\right)^{s}+\frac{s(H_{s}-2)\, 2^{s}\, s!}{(1-u)^{s+1}}\left(\log\frac{1}{1-u}\right)^{s-1}+\mathfrak{R}_{s-2,s+1}(u),$$
after which using equation \ref{main} we can conclude
$$\beta_{s}(n)=2^{s}n^{s}\log^{s}{n}+2^{s}s(\gamma-2)n^{s}\log^{s-1}{n}+O(n^{s}\cdot \log^{s-2}{n}).$$
\bibliographystyle{amsplain}
\bibliography{sample.bib}
\end{document}